\documentclass[11pt]{article}
\newif\ifconference
\conferencefalse

\usepackage[T1]{fontenc}
\usepackage{lmodern}
\usepackage{amsmath,amssymb,amsthm}
\usepackage[margin=1in,a4paper]{geometry}
\usepackage{enumitem}
\usepackage{microtype}
\usepackage{xcolor}
\usepackage{xspace}
\usepackage{hyperref}
\usepackage{authblk}
\usepackage{lineno}
\usepackage{todonotes}

\hypersetup{
  colorlinks=true,
  linkcolor=blue!55!black,
  citecolor=blue!55!black,
  urlcolor=blue!55!black,
  pdfauthor={Tony Huynh, Eun Jung Kim, Sang-il Oum, Roohani Sharma, and Marek Sokołowski},
  pdftitle={Multiway f-Cut is fixed-parameter tractable}
}

\setlist{itemsep=0.3em,topsep=0.5em}

\newtheorem{theorem}{Theorem}[section]
\newtheorem{lemma}[theorem]{Lemma}

\newcommand{\abs}[1]{\lvert #1\rvert}
\newcommand{\OPT}{\operatorname{OPT}}
\newcommand{\SFM}{\operatorname{SFM}}
\newcommand{\FCUT}{\textsc{Multiway} \(f\)-\textsc{Cut}\xspace}

\begin{document}

\ifconference
\linenumbers
\fi
\title{Multiway $f$-Cut is fixed-parameter tractable}
\ifconference
  \author{}\date{}
\else
\author{Tony Huynh\thanks{Supported by the Institute for Basic Science (IBS-R029-C1).}}
\affil{Discrete Mathematics Group, Institute for Basic Science (IBS), Daejeon,~South~Korea}
\author[2,1]{Eun Jung Kim\textsuperscript{\textasteriskcentered}}
\affil[2]{School of Computing, KAIST, Daejeon, South~Korea}
\author[1,3]{Sang-il~Oum\textsuperscript{\textasteriskcentered}}
\affil[3]{Department of Mathematical Sciences, KAIST, Daejeon, South~Korea}
\author[1]{Roohani Sharma\thanks{Supported by the Young Scientist Fellowship of the Institute for Basic Science (IBS-R029-Y8).}}
\author[4]{Marek~Soko\l{}owski}
\affil[4]{Max Planck Institute for Informatics, Saarbr\"ucken, Germany}
\affil[ ]{\small \textit{Email addresses:}
\texttt{tony@ibs.re.kr}, \texttt{eunjungkim78@gmail.com},
\texttt{sangil@ibs.re.kr}, \texttt{roohani@ibs.re.kr}, \texttt{msokolow@mpi-inf.mpg.de}}
\date{\today}
\fi

\maketitle

\begin{abstract}
A \emph{connectivity function} on a finite set \(E\) is a function
\(f\colon 2^E\to\mathbb Z\) that is submodular and symmetric, with
\(f(\varnothing)=0\).  Given a connectivity function \(f\) via a value
oracle, terminals \(t_1,\ldots,t_r\in E\), and an integer \(k\), the
\FCUT problem asks whether \(E\) has a partition
\((P_1,\ldots,P_r)\) with \(t_i\in P_i\) for every \(i\) and
\(\sum_{i=1}^r f(P_i)\le k\).  We prove that \FCUT is
fixed-parameter tractable parameterized by \(k\).

Cut functions of graphs are connectivity functions, so as a special
case we recover the classical result that \textsc{Edge Multiway Cut} in
graphs is fixed-parameter tractable.  Our proof of correctness is
completely elementary, and is arguably the simplest known proof of
this fact.
\end{abstract}

\section{Introduction}
\label{sec:introduction}

Let \(E\) be a finite ground set.  A \emph{connectivity function} on
\(E\) is a function \(f\colon 2^E\to\mathbb Z\) satisfying, for all
\(X,Y\subseteq E\),
\begin{align*}
  f(X)+f(Y)&\ge f(X\cap Y)+f(X\cup Y), & \text{(submodular)}
  \\
  f(X)&=f(E\setminus X), &\text{(symmetric)}
  \\
  f(\varnothing)&=0.
\end{align*}
Connectivity functions are ubiquitous in combinatorics: the cut
function of a graph or hypergraph, the connectivity function of a
matroid, and the cut-rank function used to define rank-width are all
examples~\cite{OumSeymour06}.  In the \FCUT problem, we are given a
connectivity function \(f\) on~\(E\) via a value oracle, distinct
terminals \(t_1,\ldots,t_r\in E\), and a nonnegative integer \(k\),
and we must decide whether \(E\) has a partition
\((P_1,\ldots,P_r)\) such that \(t_i\in P_i\) for every \(i\) and
\(\sum_{i=1}^r f(P_i)\le k\).  This is the integer-valued symmetric
special case of \textsc{Submodular Multiway Partition}.

The graph cut function is the motivating example.  Let $G$ be a graph.  For each $X \subseteq V(G)$, let \(f(X)\) be the number of
edges of \(G\) with one end in \(X\) and one end in \(V(G)\setminus X\). It is easy to verify that \(f\) is a
connectivity function, and \(\sum_i f(P_i)\) is twice the number of edges crossing the partition.  So \FCUT contains
\textsc{Edge Multiway Cut} after doubling the budget.

\textsc{Edge Multiway Cut} in graphs is NP-hard already for three
terminals, as shown by Dahlhaus, Johnson, Papadimitriou, Seymour, and
Yannakakis~\cite{DahlhausJPSY94}, who also gave a
\((2-2/r)\)-approximation.  On the parameterized side,
Marx~\cite{Marx06} proved that both the edge- and vertex-deletion
versions of \textsc{Multiway Cut} are fixed-parameter tractable in
the cut size \(k\), introducing important separators, which have
since become a standard tool \cite{Marx11,CyganFKLMPPS15}.
Xiao~\cite{Xiao10} gave an \(O^*(2^k)\) algorithm for
\textsc{Edge Multiway Cut}, later improved to
\(O^*(1.84^k)\)~\cite{CaoCF14}.  For \textsc{Vertex Multiway Cut},
sharper branching gives an \(O^*(4^k)\) algorithm~\cite{ChenLL09},
later improved to \(O^*(2^k)\) via half-integral
relaxations~\cite{CyganPPW13}.  Flow augmentation is also a powerful new technique that can be used to derive
fixed-parameter tractable algorithms for both the edge- and vertex-deletion
versions of \textsc{Multiway Cut} (see~\cite{KMPSW24}).
However, all of these methods
use graph notions such as paths, reachability, and
deletion-minimal separators, which have no obvious meaning for arbitrary connectivity functions.

\textsc{Submodular Multiway Partition} has mainly been studied from
the viewpoint of approximation.  Zhao, Nagamochi, and
Ibaraki~\cite{ZhaoNI05} analyzed greedy splitting algorithms, Chekuri
and Ene~\cite{ChekuriE11} obtained a \(2\)-approximation in general
and a \((3/2-1/r)\)-approximation in the symmetric case via the
Lov\'asz extension, and Ene, Vondr\'ak, and Wu~\cite{EneVW13}
established matching hardness in the value-oracle model; see
also~\cite{BiCJ25} for monotone objectives.  The terminal-free
problem of partitioning into \(r\) nonempty parts is different in
character. It is solvable in polynomial time for \(r\le4\)
\cite{Queyranne98,HirayamaLMSX24}, whereas \FCUT is NP-hard already
for three terminals.  To our knowledge, nothing was known about the
parameterized complexity of \FCUT beyond the graph case.

Our main result is that \FCUT is fixed-parameter tractable
parameterized by the budget \(k\) alone.  The only nonelementary
ingredient is submodular function minimization, which we use as a
black box \cite{Schrijver00,IwataFF01,Orlin09}.
Here we use the result of Orlin~\cite{Orlin09} that shows that in time
\(\SFM(n,\gamma) \in O(n^5\gamma) \) one can minimize a submodular
function on an \(n\)-element set and return a minimizer, when one
evaluation of the function takes time \(\gamma\).

\begin{theorem}
\label{thm:main}
Let \(f\) be a connectivity function on an \(n\)-element set, given
by a value oracle with query time \(\gamma\).
\begin{enumerate}[label=\textup{(\roman*)}]
  \item An instance of \FCUT with \(r\) terminals and budget \(k\)
  can be decided, and a partition returned, in time
  \[
    O\!\left(
      r^{k+2}\,n^6 (\gamma+n)
    \right).
  \]
  \item \FCUT is fixed-parameter tractable parameterized by
  \(k\) alone.  The running time is
  \[
    O\!\left(
      k^{k+2}\log(k+1)\,n^6(\gamma+n)
    \right).
  \]
\end{enumerate}
\end{theorem}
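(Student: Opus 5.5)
We give a branching algorithm whose depth is at most \(k\) and whose branching factor is at most \(r\). Each node of the search tree performs polynomially many submodular minimizations.

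\textbf{Basic object.} The algorithm works with an "extended" instance. Each terminal \(t_i\) is replaced by a set \(T_i\ni t_i\) of elements already committed to part \(i\), and \(k\) is the remaining budget. For each \(i\), let \(\lambda_i\) be the minimum of \(f(X)\) over sets \(X\) with \(T_i\subseteq X\) and \(X\cap T_j=\varnothing\) for \(j\ne i\). We compute \(\lambda_i\) by minimizing the submodular function \(X\mapsto f(X\cup T_i)\) on \(E\setminus\bigcup_j T_j\), which costs one SFM call.

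Submodularity gives uncrossing: the minimizers form a lattice. We therefore take \(M_i\) to be the \emph{inclusion-maximal} minimizer. It can be found with \(O(n)\) extra SFM calls, by testing for each element \(e\) whether adding \(e\) to \(T_i\) keeps the minimum equal to \(\lambda_i\).

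Any feasible solution satisfies \(\sum_i f(P_i)\ge\sum_i\lambda_i\). So if \(\sum_i\lambda_i>k\), we reject.

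\textbf{Key exchange lemma (the main obstacle).} This is the analogue of the ``pushing lemma'' for important separators. Suppose some optimal solution has \(f(P_i)=\lambda_i\) for a given \(i\). We want to show that there is then an optimal solution with \(P_i\supseteq M_i\).

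The plan is to take the solution \((P_1,\dots,P_r)\) and replace \(P_i\) by \(P_i\cup M_i\) and each \(P_j\) by \(P_j\setminus M_i\). Submodularity gives \(f(P_i\cup M_i)\le f(P_i)+f(M_i)-f(P_i\cap M_i)\le f(P_i)\), since \(P_i\cap M_i\) is feasible for \(\lambda_i\).

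For \(j\ne i\) we need \(\sum_{j\ne i}f(P_j\setminus M_i)\le\sum_{j\ne i} f(P_j)\). This is where symmetry enters. We write \(f(P_j\setminus M_i)=f(\overline{P_j}\cup M_i)\) and apply posimodularity, \(f(A)+f(B)\ge f(A\setminus B)+f(B\setminus A)\), which holds for symmetric submodular functions. Posimodularity must be applied term-by-term: each \(P_j\) is paired with \(M_i\), and we use that \(M_i\setminus P_j\) is feasible for \(\lambda_i\) together with the maximality of \(M_i\). I expect this step to need the most care. If a direct summation fails, I would fall back to moving one part at a time and using the "union" form of the inequality for \(P_i\).

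\textbf{Branching.} If the exchange lemma applies for every \(i\), we set \(T_i:=M_i\) when it strictly enlarges \(T_i\), which shrinks the free set. If instead every \(M_i=T_i\) and \(\sum_i\lambda_i=f\)-cost of some partition, we are done.

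Otherwise, pick a free element \(e\). In an optimal solution either some \(P_i\) has \(f(P_i)>\lambda_i\), or the pushing applies everywhere. More precisely, we branch on which index \(i\) receives \(e\). In branch \(i\), \(e\) joins \(T_i\). Maximality of \(M_i\) implies that forcing \(e\in P_i\) raises \(\lambda_i\) by at least \(1\), since values are integers. The measure \(k-\sum_i\lambda_i\) therefore strictly decreases in every branch.

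This gives at most \(r^{k}\) leaves. Each node costs \(O(rn)\) SFM calls of \(O(n^5(\gamma+n))\) each, where \(+n\) accounts for set manipulation. Together with \(O(n)\) nodes per root-to-leaf path this yields (i).

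\textbf{Part (ii).} If \(r>k\), then terminals whose optimal parts would have \(f(P_i)=0\) must satisfy \(\lambda_i=0\). Symmetry and submodularity let us merge all components of zero value. After this, at most \(k\) terminals have \(\lambda_i\ge1\); the rest are handled by taking their \(M_i\). We then guess which \(\le k\) terminals are "active", using a color-coding style or direct argument that contributes the \(\log(k+1)\) factor. Substituting \(r\le k\) into (i) gives (ii).
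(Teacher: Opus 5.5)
\textbf{Part (i).} The search as you describe it does not work, although its ingredients are right. Your exchange lemma is conditional on a property ($f(P_i)=\lambda_i$) of an unknown optimal solution, and the algorithm branches on whether it ``applies'', which cannot be tested. This matters. The measure argument in your branching step needs $e\notin M_i$ for every branch index $i$, and that is guaranteed only when $T_i=M_i$ for all $i$, which is exactly what your ``otherwise'' case does not provide. Also, setting $T_i:=M_i$ for all $i$ at once can destroy disjointness. Both failures already occur for the cut function of the path with edges $t_1e$ and $et_2$. Here $M_1=\{t_1,e\}$ and $M_2=\{e,t_2\}$ overlap, and sending $e$ to either side leaves $\lambda_1=\lambda_2=1$, so $k-\sum_i\lambda_i$ does not drop.

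The repair is that the term-by-term posimodularity you sketch proves the exchange \emph{unconditionally}, for every feasible partition and every minimizer $M_i$. Since $T_i\subseteq P_i$, the set $M_i\setminus P_j$ contains $T_i$ and avoids the other committed sets. Hence $f(M_i\setminus P_j)\ge\lambda_i=f(M_i)$, and posimodularity gives $f(P_j\setminus M_i)\le f(P_j)$; neither maximality nor $f(P_i)=\lambda_i$ is used. So you must always push, one index at a time and recomputing all $\lambda_j,M_j$ after each push, until $T_i=M_i$ for every $i$, and only then branch. Maximality of $M_i$ is what makes each branch raise $\lambda_i$.

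Your time count also misses the target. With $r^k$ leaves, $O(n)$ nodes per path and $O(rn)$ SFM calls per node, you get $O(r^{k+1}n^7(\gamma+n))$, a factor $n/r$ above the claimed bound. You need each maximal minimizer from a single SFM call, e.g.\ by minimizing $(n+1)f(X\cup T_i)-\abs{X}$, which works because $f$ is integer-valued.

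\textbf{Part (ii).} This is not yet an argument. You need that a terminal with $\lambda_i=0$ can be dropped from the branching altogether. Knowing only $P_i\supseteq M_i$ still leaves $i$ as one of $r$ branches for every free element, so the branching factor stays $r$. The claim is true but must be shown. If $f(M_i)=0$ and $M_i\subseteq P_i$, posimodularity gives $f(P_i)\ge f(P_i\setminus M_i)$. By submodularity, moving $P_i\setminus M_i$ into another part raises that part's value by at most $f(P_i\setminus M_i)$. So some optimal solution has $P_i=M_i$ exactly, and later pushes for other indices never touch it because their minimizers avoid $T_i=M_i$. Only then do you get that at most $\sum_i\lambda_i\le k$ terminals take part in the branching.

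Your sentence about terminals with $f(P_i)=0$ states the converse of what is needed. The ``guess the active terminals by color coding'' step is spurious: the active terminals are simply those with $\lambda_i\ge1$. The $\log(k+1)$ in the statement comes, in the paper, from binary search on the budget within each connected component, not from guessing.

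For comparison, the paper sidesteps the pushing lemma and all disjointness bookkeeping. It searches for a \emph{cover} with the monotone submodular functions $g_i$, where the free-addition rule plays the role of your pushing, and it uncrosses the possibly overlapping minimizers once at the end. For (ii) it splits $f$ along zero-valued sets into connected components, where a component with $q\ge2$ terminals costs at least $q$.
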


The idea is simple enough to describe in a paragraph.  Remove the
terminals and, for each \(t_i\), let \(g_i(X)\) be the minimum
\(f\)-value of a set containing \(X\cup\{t_i\}\) and no other
terminal.  Each \(g_i\) is monotone and submodular, so a
bounded-depth search tree finds sets \(X_1,\ldots,X_r\) covering the
nonterminals with \(\sum_i g_i(X_i)\le k\): whenever an uncovered
element can be added somewhere for free we add it, and otherwise
every branch spends a unit of budget.  The minimizers behind the
\(g_i(X_i)\) may overlap, but symmetry gives the posimodular
inequality \(f(A)+f(B)\ge f(A\setminus B)+f(B\setminus A)\), which
lets us uncross them into a partition without increasing the cost.
This proves (i).  For (ii), we split the
instance into ``connected pieces''.
We then observe that each connected piece, whenever it is split by a feasible partition,
contributes at least \(1\) to the cost of this partition. This observation allows us to identify at most \(k\) relevant terminals within any yes-instance, enabling us to invoke (i).

\section{Proof of Theorem~\ref{thm:main}(i)}
\label{sec:main-proof}

For the rest of the paper, \(E\) has \(n\) elements.  Connectivity functions are
nonnegative, since symmetry and submodularity give
  $2f(X)=f(X)+f(E\setminus X)\ge f(E)+f(\varnothing)=0$.

The first lemma does not use symmetry.
We say a function $g:2^E\to\mathbb Z$ is \emph{monotone} if
$g(X)\le g(Y)$ whenever $X\subseteq Y\subseteq E$.
A variant of this lemma appeared in Korhonen and Oum~\cite[Proposition 3.1]{KO2026}, who studied $\max g_i(X_i)$ instead of $\sum g_i(X_i)$.

\begin{lemma}
\label{lem:cover}
Let \(g_1,\ldots,g_r\colon2^U\to\mathbb Z_{\ge0}\) be monotone
submodular functions, each evaluable in time \(\rho\ge\abs U\).
Given \(k\in\mathbb Z_{\ge0}\), one can decide in time
\(O(r^{k+2}\rho\abs{U})\) whether there are sets
\(X_1,\ldots,X_r\subseteq U\) with
\[
  \bigcup_{i=1}^r X_i=U
  \quad\text{and}\quad
  \sum_{i=1}^r g_i(X_i)\le k,
\]
and return such a cover when one exists.
\end{lemma}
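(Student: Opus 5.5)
We will use a bounded-depth search tree whose nodes are tuples \((X_1,\ldots,X_r)\) of subsets of \(U\). The potential at a node is \(k-\sum_i g_i(X_i)\). We start at the root \((\varnothing,\ldots,\varnothing)\). If \(\sum_i g_i(\varnothing)>k\), we reject immediately, since monotonicity gives \(g_i(X_i)\ge g_i(\varnothing)\) for any cover.

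At a node, we first perform \emph{free extensions}. For each uncovered element \(u\in U\setminus\bigcup_i X_i\) and each index \(i\), we test whether \(g_i(X_i\cup\{u\})=g_i(X_i)\). If such a pair exists, we add \(u\) to \(X_i\) without branching, and we repeat until no such pair exists. If everything is then covered and \(\sum_i g_i(X_i)\le k\), we output the cover. Otherwise, we pick any uncovered element \(u\) and branch into \(r\) children, one for each \(i\), where \(u\) is added to \(X_i\). In every child the sum strictly increases, because \(g_i\) is integer-valued, monotone, and the extension was not free. Hence each branching step consumes at least one unit of budget. We prune any child whose sum exceeds \(k\). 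The tree therefore has depth at most \(k\) in branching steps and at most \(r^{k+1}\) leaves. At each node, the free-extension phase adds at most \(\abs U\) elements, and each addition costs at most \(r\abs U\) evaluations when scanned naively. To meet the claimed bound, we instead scan elements in a fixed order and rescan only when needed, which I expect can be organized in \(O(r\abs U)\) evaluations per node. That gives \(O(r^{k+2}\rho\abs U)\) overall.

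Correctness reduces to the invariant: \emph{if some cover with sum at most \(k\) extends the current node, i.e.\ \(Y_i\supseteq X_i\) with \(\sum g_i(Y_i)\le k\), then some child or the free-extended node still admits such an extending cover.} For branching this is immediate, because \(u\) lies in some \(Y_i\) and we may choose that child. The main obstacle is justifying the free extension, where adding \(u\) to \(X_i\) may not be consistent with the given solution \((Y_j)\). Here submodularity gives the exchange we need. If \(g_i(X_i\cup\{u\})=g_i(X_i)\) and \(X_i\subseteq Y_i\), then diminishing returns gives
\[
g_i(Y_i\cup\{u\})-g_i(Y_i)\le g_i(X_i\cup\{u\})-g_i(X_i)=0.
\]
So replacing \(Y_i\) by \(Y_i\cup\{u\}\) keeps the cost, the covering property, and extension of the new node. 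This works without having to remove \(u\) from the other \(Y_j\), since covers may overlap and deleting elements only lowers costs by monotonicity. Thus free steps preserve the existence of a solution, and the search is exact.

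Finally, soundness holds trivially, because we only output genuine covers with sum at most \(k\). Rejection happens only when every leaf is pruned, and by the invariant this implies that no solution exists.
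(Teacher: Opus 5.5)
Your correctness argument (pruning via monotonicity, exhaustive branching on an uncovered element, and the diminishing-returns exchange $g_i(Y_i\cup\{u\})-g_i(Y_i)\le g_i(X_i\cup\{u\})-g_i(X_i)=0$ for free additions) is the same as the paper's. The gap is in the running time, which is part of the statement. The analysis you actually carry out has $\abs U$ free additions per node, each located by a naive scan of $r\abs U$ pairs. That gives $O(r\abs U^2)$ evaluations per node and hence $O(r^{k+2}\rho\abs U^2)$ overall, a factor $\abs U$ above the claim. The improvement to $O(r\abs U)$ is only asserted (``I expect''). The issue that seems to force rescanning is also left open: whether a pair found non-free can become free after later free additions.

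This closes easily, in either of two ways.

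\emph{The paper's route} avoids the issue altogether. Branching only needs the \emph{particular} element being branched on to be non-free for every $i$; it does not need that no free pair remains anywhere. So pick an uncovered $v$ and test the $r$ indices once. If some test succeeds, add $v$ and continue; otherwise branch on $v$ at once. Each node then tests at most $\abs U$ elements, hence makes $O(r\abs U)$ evaluations, with nothing further to prove.

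\emph{Alternatively}, keep your exhaustive free phase and note that free additions never change $g_i(X_i)$. Suppose $X_i\subseteq X_i'$, $g_i(X_i')=g_i(X_i)$, and $w$ is free for $X_i'$. Then $g_i(X_i\cup\{w\})\le g_i(X_i'\cup\{w\})=g_i(X_i')=g_i(X_i)$ by monotonicity, so $w$ was already free for $X_i$. Hence a pair found non-free stays non-free for the rest of the phase, and a single pass over the $r\abs U$ pairs suffices.

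Finally, your implicit node count $O(r^{k+1})$ needs $r\ge2$. For $r=1$ the tree is a path with up to $k+2$ nodes, so treat that case separately, as the paper does; the answer is simply whether $g_1(U)\le k$.
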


\begin{proof}
The case \(r=1\) is immediate, and if \(U=\varnothing\), it suffices
to test whether \(\sum_i g_i(\varnothing)\le k\).  So assume
\(r\ge2\) and \(U\ne\varnothing\). Consider the following branching scheme. We maintain
partial sets \(S_1,\ldots,S_r\), initially all empty, and we say a
solution \((X_1,\ldots,X_r)\) \emph{extends} the partial sets if
\(S_i\subseteq X_i\) for all \(i\).  If \(\sum_i g_i(S_i)>k\), the
current search node fails; this is sound because monotonicity makes
\(\sum_i g_i(S_i)\) a lower bound on the value of every extension.
If the \(S_i\) cover \(U\), they are a solution.

Otherwise choose an uncovered element \(v\).  First check whether
\[
  g_i(S_i\cup\{v\})=g_i(S_i)
  \label{eq:free-addition}
  \tag{1}
\]
for some \(i\).  If so, add \(v\) to \(S_i\) and continue at the same
node.  This is safe: for every \(Z\supseteq S_i\), submodularity and
monotonicity give
\[
  0\le g_i(Z\cup\{v\})-g_i(Z)
  \le g_i(S_i\cup\{v\})-g_i(S_i)=0,
\]
so any solution extending the old partial sets can be modified, by
inserting \(v\) into its \(i\)-th set, into one extending the new
partial sets, at no extra cost.

If no equality in \eqref{eq:free-addition} holds, create one branch
for each \(i\in [r]\), adding \(v\) to \(S_i\) in branch~\(i\).  The branching is exhaustive because every cover contains
\(v\) in some part.  In each branch \(\sum_i g_i(S_i)\) increases by
at least one, since the functions are integer-valued.  A node
branches only if \(\sum_i g_i(S_i)\le k\), and since
\(\sum_i g_i(\varnothing)\ge0\), this sum is at least the number of
branching steps made so far.  Every root-to-leaf path therefore
contains at most \(k+1\) branching steps, and the search tree has
\(O(r^{k+1})\) nodes.

At one node, at most \(\abs{U}\) free additions occur before the node
branches, succeeds, or fails.  Each addition attempt uses at most
\(r\) oracle evaluations together with \(O(\abs U)\) bookkeeping,
which is \(O(r\rho)\) since \(\rho\ge\abs U\), and creating the
children of a branching node copies \(r\) sets of size at most
\(\abs U\), in time \(O(r\abs U)=O(r\rho)\).  Each node thus costs
\(O(r\rho\abs U)\), and the total time is \(O(r^{k+2}\rho\abs{U})\).
\end{proof}

We say that \(f:2^E\to\mathbb R\) is \emph{posimodular} if $f(A)+f(B)\ge f(A\setminus B)+f(B\setminus A)$,  for all \(A,B\subseteq E\).  We require the following well-known fact.  For completeness, we include the proof.

\begin{lemma}
\label{lem:posimodular}
If \(f:2^E\to\mathbb Z\) is symmetric and submodular, then $f$ is posimodular.
\end{lemma}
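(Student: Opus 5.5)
The plan is to apply submodularity to \(A\) and the complement of \(B\), and then use symmetry to convert each term back into one involving \(A\), \(B\), \(A\setminus B\), or \(B\setminus A\). No earlier result beyond the definitions is needed.

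Fix \(A,B\subseteq E\) and write \(\bar B=E\setminus B\). Symmetry gives \(f(B)=f(\bar B)\). Submodularity applied to the pair \(A,\bar B\) then gives
\[
  f(A)+f(B)=f(A)+f(\bar B)\ge f(A\cap\bar B)+f(A\cup\bar B).
\]

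Next I identify the two sets on the right. First, \(A\cap\bar B=A\setminus B\). Second, the complement of \(A\cup\bar B\) is \(\bar A\cap B=B\setminus A\). Applying symmetry once more yields \(f(A\cup\bar B)=f(B\setminus A)\). Substituting both identities gives
\[
  f(A)+f(B)\ge f(A\setminus B)+f(B\setminus A),
\]
which is exactly posimodularity.

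The only step requiring care is choosing to complement \(B\) before invoking submodularity, so that the intersection and union land on \(A\setminus B\) and the complement of \(B\setminus A\). After that choice the argument is two set identities and two uses of symmetry. The normalization \(f(\varnothing)=0\) and integrality play no role.
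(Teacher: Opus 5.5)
Your proof is correct and is essentially identical to the paper's: apply symmetry to \(B\), submodularity to the pair \(A, E\setminus B\), and then symmetry again to identify \(f(A\cup(E\setminus B))\) with \(f(B\setminus A)\). Your remark that neither normalization nor integrality is needed is also accurate.
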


\begin{proof}
Apply symmetry, then submodularity to \(A\) and \(E\setminus B\),
and then symmetry again:
\[
f(A)+f(B)
 =f(A)+f(E\setminus B)
 \ge f(A \cap (E\setminus B))+f\bigl(A\cup(E\setminus B)\bigr)
 =f(A\setminus B)+f(B\setminus A).
\qedhere
\]
\end{proof}

The next lemma uses posimodularity to uncross a cover into a partition.

\begin{lemma}
\label{lem:uncross}
Let \(f\) be a connectivity function on \(E\) and let
\(A_1,\ldots,A_r\subseteq E\) cover \(E\).  In time
\(O(r^2(rn+\gamma))\), where \(\gamma\) is the value-query time, one
can find a partition \(P_1,\ldots,P_r\) of \(E\), possibly with
empty parts, such that, for every
\(i\),
\[
  A_i\setminus\bigcup_{j\ne i}A_j
  \subseteq P_i\subseteq A_i
  \quad\text{and}\quad
  f(P_i)\le f(A_i).
\]
\end{lemma}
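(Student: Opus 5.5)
We will uncross the sets one pair at a time using Lemma~\ref{lem:posimodular}. Maintain sets \(B_1,\ldots,B_r\), initialized as \(B_i=A_i\). Throughout we keep three invariants: the \(B_i\) cover \(E\), \(B_i\subseteq A_i\), and \(f(B_i)\le f(A_i)\). A fourth invariant will guarantee the required lower bound on \(P_i\): every element lying in exactly one \(A_i\) is never removed from \(B_i\).

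Process the pairs \(\{i,j\}\) with \(i<j\) in any fixed order. For the current pair, compare \(f(B_i)+f(B_j)\) with \(f(B_i\setminus B_j)+f(B_j\setminus B_i)\). Lemma~\ref{lem:posimodular} gives
\[
f(B_i)+f(B_j)\ge f(B_i\setminus B_j)+f(B_j\setminus B_i).
\]
Hence at least one of \(f(B_i\setminus B_j)\le f(B_i)\) or \(f(B_j\setminus B_i)\le f(B_j)\) holds. In the first case replace \(B_i\) by \(B_i\setminus B_j\); otherwise replace \(B_j\) by \(B_j\setminus B_i\). This needs two queries, \(f(B_i\setminus B_j)\) and \(f(B_j\setminus B_i)\), together with the stored values \(f(B_i)\) and \(f(B_j)\).

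The replacement preserves all invariants.
\begin{itemize}
\item \emph{Covering.} The removed elements \(B_i\cap B_j\) are still contained in the other set.
\item \emph{Containment and cost.} Sets only shrink, and \(f(B_i)\) does not increase.
\item \emph{Unique elements.} An element lying in exactly one \(A_\ell\) can only ever be in \(B_\ell\), so it is never in an intersection \(B_i\cap B_j\) and is never removed.
\end{itemize}
After the pair is processed, \(B_i\cap B_j=\varnothing\), and since sets only shrink, this stays true for the rest of the procedure. So after all \(\binom r2\) pairs the \(B_i\) are pairwise disjoint and still cover \(E\). Setting \(P_i=B_i\) then gives a partition with \(A_i\setminus\bigcup_{j\ne i}A_j\subseteq P_i\subseteq A_i\) and \(f(P_i)\le f(A_i)\). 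Some parts may be empty, which the statement allows.

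For the running time, initialization costs \(r\) queries, and each of the \(O(r^2)\) pairs costs \(O(\gamma)\) for the queries plus \(O(n)\) to compute the differences. That gives \(O(r^2(n+\gamma))\), well within the claimed \(O(r^2(rn+\gamma))\). The only subtle point is ensuring that disjointness of earlier pairs is not destroyed later, and this holds because the sets only shrink.
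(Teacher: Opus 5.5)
Your proof is correct and follows the paper's argument: pairwise uncrossing via posimodularity, with the union, containment in \(A_i\), the cost bound, and the uniquely covered elements as invariants, and with disjointness persisting because the sets only shrink. The only difference is cosmetic. You sweep the \(\binom r2\) pairs once in a fixed order, whereas the paper repeatedly searches for an intersecting pair; this avoids the \(O(rn)\) search and gives your slightly sharper \(O(r^2(n+\gamma))\) bound.
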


\begin{proof}
Initialize \(Z_i=A_i\).  While some pair \(Z_i\cap Z_j\) is nonempty,
Lemma~\ref{lem:posimodular} gives
\(f(Z_i)+f(Z_j)\ge f(Z_i\setminus Z_j)+f(Z_j\setminus Z_i)\), so
\(f(Z_i\setminus Z_j)\le f(Z_i)\) or
\(f(Z_j\setminus Z_i)\le f(Z_j)\).  In the first case replace \(Z_i\)
by \(Z_i\setminus Z_j\); in the second, replace \(Z_j\) by
\(Z_j\setminus Z_i\).

Each step preserves the union of the sets and does not increase the
value of the modified set.  An element belonging only to the original
\(A_i\) is never removed from \(Z_i\), since it lies in no
\(Z_j\subseteq A_j\).  Moreover, the number of nonempty pairwise
intersections strictly decreases: one intersection becomes empty and
no new one appears.  Since there are at most \(\binom r2\) such
pairs, the process terminates with the desired partition
\(P_i=Z_i\).

There are \(O(r^2)\) iterations; a crossing pair can be found in
\(O(rn)\) time, and each iteration uses two oracle queries and
\(O(n)\) set operations.
\end{proof}

We now reduce \FCUT to a monotone cover problem.  Let
\(T=\{t_1,\ldots,t_r\}\) and \(U=E\setminus T\). For each
\(i\in [r]\) and $X \subseteq U$, define
\[
  g_i(X)=
  \min\bigl\{
    f(Y\cup\{t_i\}):X\subseteq Y\subseteq U
  \bigr\}.
\]

\begin{lemma}
\label{lem:terminal-closures}
Each \(g_i\) is nonnegative, integer-valued, monotone, and
submodular.
\end{lemma}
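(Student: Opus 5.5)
Nonnegativity and integrality follow directly from the definition. The minimum defining \(g_i(X)\) ranges over a nonempty finite family, since \(Y=U\) is always admissible. Each value \(f(Y\cup\{t_i\})\) is an integer, and it is nonnegative because connectivity functions are nonnegative (shown at the start of this section). Monotonicity is also immediate. If \(X\subseteq X'\subseteq U\), then every \(Y\) feasible for \(X'\) is feasible for \(X\). So \(g_i(X)\) is a minimum over a larger family, and \(g_i(X)\le g_i(X')\).

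The main step is submodularity, which I would prove by uncrossing minimizers. Fix \(X,X'\subseteq U\). Choose \(Y\) with \(X\subseteq Y\subseteq U\) and \(g_i(X)=f(Y\cup\{t_i\})\), and choose \(Y'\) with \(X'\subseteq Y'\subseteq U\) and \(g_i(X')=f(Y'\cup\{t_i\})\). Put \(A=Y\cup\{t_i\}\) and \(B=Y'\cup\{t_i\}\). Since \(t_i\notin U\), we have
\[
A\cap B=(Y\cap Y')\cup\{t_i\}
\quad\text{and}\quad
A\cup B=(Y\cup Y')\cup\{t_i\}.
\]
The set \(Y\cap Y'\) lies in \(U\) and contains \(X\cap X'\). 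The set \(Y\cup Y'\) lies in \(U\) and contains \(X\cup X'\). Hence, by the definition of \(g_i\) and submodularity of \(f\),
\[
g_i(X\cap X')+g_i(X\cup X')\le f(A\cap B)+f(A\cup B)\le f(A)+f(B)=g_i(X)+g_i(X').
\]

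I do not expect a real obstacle; this lemma uses only submodularity of \(f\), not symmetry. The one point that needs care is that \(t_i\) belongs to both \(A\) and \(B\). This ensures that \(A\cap B\) and \(A\cup B\) are again of the form \(Z\cup\{t_i\}\) with \(Z\subseteq U\). It also keeps the other terminals out, because \(Y\cap Y'\) and \(Y\cup Y'\) stay inside \(U\).
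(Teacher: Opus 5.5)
Your proof is correct and follows essentially the same argument as the paper: monotonicity holds because enlarging $X$ shrinks the feasible family, and submodularity follows by applying submodularity of $f$ to the two minimizers (each containing $t_i$), whose union and intersection are feasible for $g_i(X\cup X')$ and $g_i(X\cap X')$. Your extra remarks, namely that the family is nonempty because $Y=U$ is admissible and that nonnegativity comes from connectivity functions being nonnegative, only make explicit what the paper leaves implicit.
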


\begin{proof}
Nonnegativity and integrality are inherited from \(f\), and
monotonicity holds because enlarging \(X\) shrinks the family over
which the minimum is taken.  For submodularity, let
\(A'\supseteq A\) and \(B'\supseteq B\) be subsets of \(U\) attaining
\(g_i(A)\) and \(g_i(B)\).  Then
\[
\begin{aligned}
g_i(A)+g_i(B)
&=f(A'\cup\{t_i\})+f(B'\cup\{t_i\})\\
&\ge f((A'\cup B')\cup\{t_i\})
   +f((A'\cap B')\cup\{t_i\})
\ge g_i(A\cup B)+g_i(A\cap B),
\end{aligned}
\]
where the first inequality is submodularity of \(f\), and the second
holds because \(A'\cup B'\) and \(A'\cap B'\) are feasible supersets
for the last two minima.
\end{proof}

\begin{lemma}
\label{lem:equivalence}
The \FCUT instance has a feasible partition of value at most \(k\)
if and only if \(U\) has a cover \(X_1,\ldots,X_r\) with
\(\sum_i g_i(X_i)\le k\).
\end{lemma}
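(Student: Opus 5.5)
The two directions are handled separately; the backward direction is where the uncrossing lemma does the real work.

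For the forward direction, suppose \((P_1,\ldots,P_r)\) is a partition of \(E\) with \(t_i\in P_i\) and \(\sum_i f(P_i)\le k\). Set \(X_i=P_i\setminus\{t_i\}\). Since \(P_i\) contains \(t_i\) and no other terminal, \(X_i\subseteq U\) and \(P_i=X_i\cup\{t_i\}\). Taking \(Y=X_i\) in the definition of \(g_i\) gives \(g_i(X_i)\le f(P_i)\). The \(X_i\) cover \(U\) because the \(P_i\) cover \(E\), so \(\sum_i g_i(X_i)\le k\).

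For the backward direction, let \(X_1,\ldots,X_r\) be a cover of \(U\) with \(\sum_i g_i(X_i)\le k\). For each \(i\), choose a minimizer \(Y_i\) with \(X_i\subseteq Y_i\subseteq U\) and \(f(Y_i\cup\{t_i\})=g_i(X_i)\), and put \(A_i=Y_i\cup\{t_i\}\). The \(A_i\) cover \(E\): the \(Y_i\supseteq X_i\) cover \(U\), and \(t_i\in A_i\).

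Now apply Lemma~\ref{lem:uncross} to \(A_1,\ldots,A_r\). It yields a partition \((P_1,\ldots,P_r)\) of \(E\) with \(f(P_i)\le f(A_i)\). This gives \(\sum_i f(P_i)\le\sum_i g_i(X_i)\le k\).

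The remaining point is \(t_i\in P_i\), and this is the only step needing care. The terminal \(t_i\) lies in \(A_i\), but it lies in no \(A_j\) with \(j\ne i\), since \(A_j\subseteq U\cup\{t_j\}\) and the terminals are distinct. Hence \(t_i\in A_i\setminus\bigcup_{j\ne i}A_j\subseteq P_i\), by the first containment guaranteed in Lemma~\ref{lem:uncross}. So \((P_1,\ldots,P_r)\) is a feasible partition of value at most \(k\).

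Possibly empty parts are not an issue, since each \(P_i\) contains \(t_i\).
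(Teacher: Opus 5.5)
Your proof is correct and follows the paper's argument essentially step for step. In the forward direction you set \(X_i=P_i\setminus\{t_i\}\) and use \(Y=X_i\) as a witness for \(g_i(X_i)\le f(P_i)\). In the backward direction you take witnessing minimizers, form \(A_i=Y_i\cup\{t_i\}\), and apply Lemma~\ref{lem:uncross}; since \(t_i\) lies in no \(A_j\) with \(j\ne i\), this gives \(t_i\in P_i\).
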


\begin{proof}
If \(P_1,\ldots,P_r\) is a feasible partition, set
\(X_i=P_i\setminus\{t_i\}\).  Since no \(P_i\) contains a terminal
other than \(t_i\), the sets \(X_i\) cover \(U\), and
\[
  \sum_{i=1}^r g_i(X_i)
  \le\sum_{i=1}^r f(X_i\cup\{t_i\})
  =\sum_{i=1}^r f(P_i)\le k.
\]

Conversely, let \(X_1,\ldots,X_r\) be such a cover.  For each \(i\),
choose \(Z_i\) with \(X_i\subseteq Z_i\subseteq U\) and
\(f(Z_i\cup\{t_i\})=g_i(X_i)\), and set \(A_i=Z_i\cup\{t_i\}\).  The
\(A_i\) cover \(E\), and \(t_i\) lies in \(A_i\) and in no other~\(A_j\).  Lemma~\ref{lem:uncross} therefore yields a partition
\(P_1,\ldots,P_r\) with \(t_i\in P_i\) and
\[
  \sum_{i=1}^r f(P_i)
  \le\sum_{i=1}^r f(A_i)
  =\sum_{i=1}^r g_i(X_i)\le k.
  \qedhere
\]
\end{proof}

\begin{proof}[Proof of Theorem~\ref{thm:main}\textup{(i)}]
For fixed \(i\) and \(X\subseteq U\), evaluating \(g_i(X)\) means
minimizing the submodular function
\(F_{i,X}(W)=f(W\cup X\cup\{t_i\})\) over \(W\subseteq U\setminus X\).
One evaluation of \(F_{i,X}\) costs \(O(\gamma+n)\), so a value of
\(g_i(X)\), together with a witnessing minimizer, is available in
time \(\SFM(n,\gamma+n)\).  Note that
\(\SFM(n,\gamma+n)\ge\gamma+n\ge\abs U\). 

By Lemma~\ref{lem:terminal-closures}, we may apply
Lemma~\ref{lem:cover} to \(g_1,\ldots,g_r\) with
\(\rho=\SFM(n,\gamma+n)\); together with
Lemma~\ref{lem:equivalence}, this decides the instance in time
\(O(r^{k+2}\,n\cdot\SFM(n,\gamma+n)) = O(r^{k+2}\,n^6(\gamma+n))\).  To also return a partition,
proceed as in the proof of Lemma~\ref{lem:equivalence}: compute
witnessing minimizers \(Z_i\) for the final sets \(X_i\) with \(r\)
further calls to submodular minimization, then uncross via
Lemma~\ref{lem:uncross}.  Since \(r\le n\) and
\(\SFM(n,\gamma+n)\ge\gamma+n\), this extra
\(O(r\cdot\SFM(n,\gamma+n)+r^2(rn+\gamma))\) time is absorbed into
the stated bound.
\end{proof}

\section{Proof of Theorem~\ref{thm:main}(ii)}
\label{sec:components}

Call a connectivity function \(f\) on \(E\) \emph{connected} if
\(f(X)>0\) for every nonempty proper \(X\subset E\).  Every connectivity function decomposes uniquely into ``connected components''.  For completeness, we include a proof.

\begin{lemma}
\label{lem:zero-split}
Let \(f:2^E\to\mathbb Z\) be a connectivity function
and let $A\subseteq E$.
If \(f(A)=0\), then
\(f(X)=f(X\cap A)+f(X\setminus A)\) for every \(X\subseteq E\), and
the restrictions of \(f\) to \(A\) and to \(E\setminus A\) are
connectivity functions on their ground sets.
\end{lemma}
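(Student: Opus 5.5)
Write \(B=E\setminus A\), so that \(f(B)=f(A)=0\) by symmetry.

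We first show \(f(X)\le f(X\cap A)+f(X\setminus A)\) for every \(X\subseteq E\). Apply submodularity to the sets \(X\cap A\) and \(X\cup A\)... a cleaner route is the following. Submodularity applied to \(X\) and \(A\) gives
\[
f(X)+f(A)\ge f(X\cap A)+f(X\cup A),
\]
that is, \(f(X)\ge f(X\cap A)+f(X\cup A)\). By symmetry, \(f(X\cup A)=f(E\setminus(X\cup A))=f(B\setminus X)\). Similarly, applying posimodularity (Lemma~\ref{lem:posimodular}) to \(X\) and \(A\) yields
\[
f(X)+f(A)\ge f(X\setminus A)+f(A\setminus X),
\]
so \(f(X)\ge f(X\setminus A)+f(A\setminus X)\). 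These two bounds are not yet in the desired form, so we aim for the upper bound instead.

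For the upper bound, apply submodularity to the pair \(X\cap A\) and \(X\cup B\); note that their union is \(E\) and their intersection is \(X\). This gives
\[
f(X\cap A)+f(X\cup B)\ge f(X)+f(E)=f(X).
\]
Next, \(f(X\cup B)=f(X\setminus A\cup B)\). To bound it, apply submodularity to \(X\setminus A\) and \(B\): their union is \((X\setminus A)\cup B=X\cup B\) and their intersection is \(X\setminus A\). Hence
\[
f(X\setminus A)+f(B)\ge f(X\setminus A)+f(X\cup B),
\]
which only gives \(f(X\cup B)\le f(B)=0\). That is not what we want, so we adjust the choice. Since \(X\setminus A\subseteq B\), we use instead the complement pair. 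Applying the posimodular inequality to \(X\setminus A\) and \(A\) gives \(f(X\setminus A)+f(A)\ge f(X\setminus A)+f(A)\), which is trivial. So we choose a better pairing.

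The right pairing is the following. We want \(f(X\cup B)\le f(X\setminus A)\), since \(X\cup B=(X\cap A)\cup B\). The cleaner approach is to establish a general claim: if \(f(A)=0\), then \(f(Y\cup B)\le f(Y)\) for every \(Y\subseteq A\). To prove the claim, apply submodularity to \(Y\) and \(B\), which are disjoint, to get \(f(Y)+0\ge f(\varnothing)+f(Y\cup B)\). Hence \(f(Y\cup B)\le f(Y)\). By the same argument with the roles exchanged, \(f(Y)\le f(Y\cup B)\) holds via \(Y=(Y\cup B)\cap A\), using submodularity of \(Y\cup B\) and \(A\) with union \(E\). So \(f(Y\cup B)=f(Y)\) for all \(Y\subseteq A\), and symmetrically \(f(Z\cup A)=f(Z)\) for all \(Z\subseteq B\).

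Now return to the upper bound \(f(X)\le f(X\cap A)+f(X\cup B)\). Since \(X\cup B=(X\setminus A)\cup B\), and by symmetry \(f(X\cup B)=f(A\setminus X)\), we have \(f(A\setminus X)=f((A\setminus X)\cup B)\) by the claim. But \((A\setminus X)\cup B\) is the complement of \(X\cap A\), so this yields \(f(X)\le 2f(X\cap A)\), which is still not the desired bound. Instead, we use disjoint submodularity directly. Apply submodularity to \(X\cap A\) and \(X\setminus A\), whose intersection is empty, to obtain
\[
f(X\cap A)+f(X\setminus A)\ge f(X).
\]
This is the upper bound.

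For the reverse inequality, apply submodularity to \(X\) and \(A\) to get \(f(X)\ge f(X\cap A)+f(X\cup A)\). By the claim, \(f(X\cup A)=f((X\setminus A)\cup A)=f(X\setminus A)\), since \(X\setminus A\subseteq B\). Hence \(f(X)\ge f(X\cap A)+f(X\setminus A)\). Combined with the upper bound, this proves \(f(X)=f(X\cap A)+f(X\setminus A)\).

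Finally, we check that the restriction \(f|_A\) is a connectivity function on \(A\). Submodularity and \(f(\varnothing)=0\) are inherited directly from \(f\). For symmetry within \(A\), take \(Y\subseteq A\). Then
\[
f(A\setminus Y)=f(E\setminus(A\setminus Y))=f(Y\cup B)=f(Y),
\]
using the symmetry of \(f\) and then the claim. The same argument applies to \(B\), exchanging the roles of \(A\) and \(B\). The only delicate step in the whole argument is the claim \(f(Y\cup B)=f(Y)\).
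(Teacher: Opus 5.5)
Your final argument is correct, but it is organized differently from the paper's. The paper writes down four inequalities at once. Two are lower bounds, $f(X)\ge f(X\cap A)+f(\overline{X}\cap\overline{A})$ and $f(X)\ge f(X\cap\overline{A})+f(\overline{X}\cap A)$, obtained from submodularity with $A$ and with $\overline{A}$ followed by symmetry. Note that symmetry rewrites $f(X\cup A)$ as $f(\overline{X}\cap\overline{A})$, not as $f(X\setminus A)$. The other two are upper bounds, from disjoint submodularity inside $X$ and inside $\overline{X}$. Summing sandwiches $2f(X)$ and forces all four to be equalities; symmetry of $f|_A$ is then read off by applying the identity to $E\setminus X$.

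You instead first isolate the claim $f(Y\cup B)=f(Y)$ for $Y\subseteq A$, where $B=\overline{A}$, together with its mirror image for $Z\subseteq B$. This lets you convert $f(X\cup A)$ directly into $f(X\setminus A)$, so no averaging is needed. Symmetry of the restriction then becomes a restatement of the claim. The logical order is thus reversed: the paper derives restriction-symmetry from additivity, while you derive additivity from (half of) restriction-symmetry. Your route is more linear, and for the identity alone it needs only three submodularity inequalities: $X\cap A$ with $X\setminus A$, $X$ with $A$, and $X\cup A$ with $B$. The paper's route is more symmetric and yields extra equalities such as $f(X)=f(X\cap A)+f(\overline{X}\cap\overline{A})$ at no cost.

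You should delete the exploratory passages, because several statements there are false rather than merely unhelpful.
\begin{itemize}
\item The sets $X\cap A$ and $X\cup B$ are nested, with union $X\cup B$ and intersection $X\cap A$, not $E$ and $X$.
\item $X\cup B=(X\cap A)\cup B$, not $(X\setminus A)\cup B=B$.
\item The inequalities you derive from these slips, namely $f(X)\le f(X\cap A)+f(X\cup B)$, $f(X)\le 2f(X\cap A)$, and $f(X\cup B)\le f(B)=0$, all fail in general. For the cut function of the graph with edges $\{1,2\}$ and $\{3,4\}$, take $A=\{1,2\}$ and $X=\{3\}$ or $X=\{1\}$.
\end{itemize}
None of these is used in your final chain, so the proof stands once they are removed.
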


\begin{proof}
Write \(\overline A=E\setminus A\) and \(\overline X=E\setminus X\).
Submodularity applied to the pairs \((X,A)\) and
\((X,\overline A)\), followed by symmetry and \(f(A)=f(\overline
A)=0\), gives
\[
f(X)\ge f(X\cap A)+f(\overline X\cap\overline A),
\qquad
f(X)\ge f(X\cap\overline A)+f(\overline X\cap A).
\]
On the other hand, submodularity and symmetry applied to the pairs $(X\cap A, X\cap\overline A)$ and $(\overline X\cap A, \overline X\cap\overline A)$ gives
\[
f(X\cap A)+f(X\cap\overline A)\ge f(X),
\qquad
f(\overline X\cap A)+f(\overline X\cap\overline A)\ge f(X).
\]
Summing yields
\[
2f(X) \geq f(X\cap A)+f(X\cap\overline A)+ f(\overline X\cap A)+f(\overline X\cap\overline A) \geq 2f(X).
\]
Thus, all four inequalities are equalities.  In particular,
\(f(X)=f(X\cap A)+f(X\cap\overline A)\).  For \(X\subseteq A\), the
same identity applied to \(E\setminus X\) yields
\(f(X)=f(A\setminus X)\), so the restriction to \(A\) is symmetric;
normalization and submodularity are inherited, and likewise for
\(\overline A\).
\end{proof}

Since a restriction of a restriction of \(f\) is again a restriction
of \(f\), applying Lemma~\ref{lem:zero-split} repeatedly, and using
induction on the number of splits, gives a partition
\(E_1,\ldots,E_m\) of \(E\) such that each restriction
\(f_j=f|_{E_j}\) is connected and
\[
  f(X)=\sum_{j=1}^m f_j(X\cap E_j)
  \qquad\text{for every }X\subseteq E.
  \label{eq:additive-components}
  \tag{2}
\]
We call $E_1, \dots, E_m$ the \emph{components} of $f$.
Queyranne's algorithm~\cite{Queyranne98} finds a zero-valued nonempty
proper set, or certifies that none exists, with \(O(n^3)\) value
queries.  As there are at most \(n-1\) splits and each query costs
\(O(\gamma+n)\), the decomposition can be computed in
\(O(n^4(\gamma+n))\) time.

\begin{lemma}
\label{lem:opt-decomposes}
Let \(T\) be the terminal set and \(T_j=T\cap E_j\).  For
\(T_j\ne\varnothing\), let \(\OPT_j\) be the optimum value of \textsc{Multiway} \(f_j\)-\textsc{Cut}
on \(E_j\) with terminal set \(T_j\).  Then \(
\OPT=\sum_{j:T_j\ne\varnothing}\OPT_j\), where \(\OPT\) is 
the optimum value
of the original instance. 
\end{lemma}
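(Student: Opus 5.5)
We prove the two inequalities \(\OPT\ge\sum_{j:T_j\ne\varnothing}\OPT_j\) and \(\OPT\le\sum_{j:T_j\ne\varnothing}\OPT_j\) separately. Both rest on the additive decomposition~\eqref{eq:additive-components}, \(f(X)=\sum_j f_j(X\cap E_j)\), and on the fact that each \(f_j\) is a connectivity function on \(E_j\), which follows from Lemma~\ref{lem:zero-split}.

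For the lower bound, take an optimal feasible partition \((P_1,\ldots,P_r)\) of the original instance. By~\eqref{eq:additive-components},
\[
\OPT=\sum_{i=1}^r f(P_i)=\sum_{j=1}^m\sum_{i=1}^r f_j(P_i\cap E_j).
\]
All terms are nonnegative, since connectivity functions are nonnegative. So we may drop the components with \(T_j=\varnothing\), and this only decreases the sum. Now fix \(j\) with \(T_j\ne\varnothing\). The sets \(P_i\cap E_j\) partition \(E_j\), and \(t_i\in P_i\cap E_j\) whenever \(t_i\in T_j\).

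These sets do not yet form a feasible partition for the component instance, because the nonempty pieces \(P_i\cap E_j\) with \(t_i\notin E_j\) must be absorbed. Fix some \(t_\ell\in T_j\). Replace \(Q_\ell=P_\ell\cap E_j\) by the union of \(Q_\ell\) with all such nonterminal pieces. By submodularity, and since \(f_j\) is nonnegative with \(f_j(\varnothing)=0\), we have \(f_j(A\cup B)\le f_j(A)+f_j(B)\) for disjoint \(A,B\). Hence this merge does not increase the total. The result is a feasible partition of \(E_j\) for terminals \(T_j\) of cost at most \(\sum_i f_j(P_i\cap E_j)\), so this sum is at least \(\OPT_j\). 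Summing over \(j\) gives \(\OPT\ge\sum_{j:T_j\ne\varnothing}\OPT_j\).

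For the upper bound, take for each \(j\) with \(T_j\ne\varnothing\) an optimal partition \((Q^j_t)_{t\in T_j}\) of \(E_j\). The components with \(T_j=\varnothing\) must be placed somewhere. Put each such \(E_j\) entirely into the part of \(t_1\); the exact choice does not matter. Define
\[
P_i=\bigcup_{j:\,t_i\in E_j}Q^j_{t_i}\;\cup\;\bigl(\text{those whole empty-terminal components, if }i=1\bigr).
\]
This is a feasible partition of \(E\). By~\eqref{eq:additive-components}, \(f(P_i)\) is the sum over components of \(f_j(P_i\cap E_j)\). For \(j\) with \(T_j\ne\varnothing\) we have \(P_i\cap E_j\) equal to \(Q^j_{t_i}\) if \(t_i\in E_j\), and empty otherwise. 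For \(j\) with \(T_j=\varnothing\) we have \(P_i\cap E_j\) equal to \(E_j\) or empty, and both give value \(0\) by symmetry and normalization. Summing over \(i\) gives \(\sum_i f(P_i)=\sum_{j:T_j\ne\varnothing}\OPT_j\).

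The main obstacle is only the bookkeeping in the lower bound, namely handling the pieces of \(E_j\) assigned to parts whose terminal lies elsewhere. The merging step via subadditivity resolves this.
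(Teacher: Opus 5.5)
Your proof is correct and follows the same route as the paper. The upper bound glues the local optima together, places the terminal-free components in an arbitrary part, and evaluates the cost via \eqref{eq:additive-components}; the lower bound restricts an optimal global partition to each component. Your explicit absorption of the pieces \(P_i\cap E_j\) with \(t_i\notin E_j\) into one terminal's part, justified by \(f_j(A\cup B)\le f_j(A)+f_j(B)\) for disjoint \(A,B\), fills in a step that the paper's one-line ``restricting'' argument leaves implicit.
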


\begin{proof}
Let $J \subseteq [m]$ be the set of indices $j$ such that $T_j \neq \varnothing$.
For each $j \in J$, let \((P_t)_{t\in T_j}\) be an optimal solution of \textsc{Multiway} \(f_j\)-\textsc{Cut}
on \(E_j\) with terminal set \(T_j\).  We construct a partition \((P_t')_{t\in T}\) of $E$ by merging $E_j$ with an arbitrary part for all $j \notin J$. By \eqref{eq:additive-components}, \((P_t')_{t\in T}\) has cost \(\sum_{j \in J}\OPT_j+ \sum_{j \notin J} f(E_j)= \sum_{j \in J}\OPT_j\).  Thus, \(\OPT \leq \sum_{j \in J}\OPT_j\).

For the other inequality let \((P_t)_{t\in T}\) be an optimal partition of $E$.  By restricting \((P_t)_{t\in T}\) to $E_j$ for each $j \in J$, and applying~\eqref{eq:additive-components}, we conclude that \(\OPT \geq \sum_{j \in J}\OPT_j\).
\end{proof}



\begin{proof}[Proof of Theorem~\ref{thm:main}\textup{(ii)}]
Compute the components \(E_1,\ldots,E_m\).  Components without terminals are
irrelevant, and a component with one terminal has local optimum zero.
Consider a component with \(q_j=\abs{T_j}\ge2\) terminals.  Every part of
a feasible local partition is nonempty, as it contains its terminal,
and proper, as \(q_j\ge2\) provides a second nonempty part disjoint
from it.  Connectedness and integrality thus force each part to cost
at least one, so \(\OPT_j\ge q_j\).  If \(\sum_{j:q_j\ge2}q_j>k\), reject.
Otherwise at most \(k\) terminals lie in components with \(q_j\ge2\), and
there are at most \(k/2\) such components.

For each of these components, binary search on the budget in
\(\{0,\ldots,k\}\) with the algorithm of
Theorem~\ref{thm:main}(i) computes \(\OPT_j\), or certifies
\(\OPT_j>k\), in which case we reject.  By
Lemma~\ref{lem:opt-decomposes}, we accept if and only if the local
optima sum to at most \(k\), and the returned local partitions
combine into a global one.

For the running time, 
observe that each relevant component has at
most \(k\) terminals and budget at most \(k\), and requires
\(O(\log(k+1))\) calls to the algorithm of
Theorem~\ref{thm:main}(i).  Summing over components and using that
\(\sum_j\abs{E_j}^6\le n^6\), the total running time is
\[
  O\!\left(
    k^{k+2}\log(k+1)\,n^6(\gamma+n)
    +n^4(\gamma+n)
  \right)
  =
  O\!\left(
    k^{k+2}\log(k+1)\,n^6(\gamma+n)
  \right)
\]
where the $n^4(\gamma + n)$ term accounts for the decomposition and is dominated by the running time of the blackbox of Theorem~\ref{thm:main}(i).
\end{proof}

\noindent
\textbf{AI Declaration.} While preparing this manuscript, Claude AI and Codex were used to check for mathematical correctness and to search for additional references.  The authors assume responsibility for all content.

\bibliographystyle{amsplain}
\bibliography{sosa_submission}

\end{document}